\newif\ifieee
\newif\ifllncs
\newif\ifamsart
\newif\ifproofs
\newif\ifpubrel
\newif\ifarchive
\newif\ifieeetrans
\newif\ifart
\newif\ifpriorities
\ifllncs\usepackage{smallsubsub}\fi
\newcommand{\cpsa}{\textsc{cpsa}}
\newcommand{\kind}[1]{\ensuremath{\mathsf{#1}}}
\newcommand{\ran}{\kind{ran}}
\newcommand{\skel}{\ensuremath{\mathbb{A}}}
\newcommand{\skelB}{\ensuremath{\mathbb{B}}}
\newcommand{\skelC}{\ensuremath{\mathbb{C}}}
\newcommand{\qdot}{\,\mathbf{.}\;}
\newcommand{\limp}{\Longrightarrow}
\newcommand{\cf}[1]{\ensuremath{\mathsf{cf}( #1 )}}
\newcommand{\bnd}{\mathcal{B}}
\newcommand{\seq}[1]{\langle #1 \rangle}
\newcommand{\length}[1]{\kind{length}( #1 )} 
\newcommand{\enc}[2]{\{\!|#1|\!\}_{#2}}
\newcommand{\fn}[1]{\ensuremath{\operatorname{\mathit{#1}}}}
\newcommand{\xstinit}{\mathsf{init}}
\newcommand{\xtran}{{\mathsf{tran}}}
\newcommand{\xobsv}{\mathsf{obsv}}
\newcommand{\stinit}{\xstinit\,}
\newcommand{\tran}{\xtran\,}
\newcommand{\obsv}{\xobsv\,}
\newcommand{\compute}{\ensuremath\mathcal{C}}
\newcommand{\cn}[1]{\ensuremath{\mathop{\smash{\sf#1}}\!\mathop{\vphantom{#1}}\nolimits}}
\newcommand{\msg}{\cn{msg}}
\newcommand{\nodes}{\ensuremath{\cn{nodes}}}
\newcommand{\syncnodes}{\ensuremath{\cn{sync}}}
\newcommand{\states}{\cn{St}}
\newcommand{\initstates}{\mathsf{Init}\!\states}
\newcommand{\trans}{\rhd}
\newcommand{\fgt}{\ensuremath{\cn{fgt}}}
\newcommand{\Pos}{\mathsf{Pos}}
\newtheorem{prop}{Proposition}{\bfseries}{\itshape}
\newtheorem{principle}{Principle}{\bfseries}{\itshape}
\newtheorem{mylemma}[prop]{Lemma}{\bfseries}{\itshape}
\newtheorem{thm}{Theorem}{\bfseries}{\itshape}
\newtheorem{eg}{Example}{\bfseries}{\upshape}
\newtheorem{cor}[prop]{Corollary}{\bfseries}{\itshape}
\spnewtheorem*{proofsketch}{Proof Sketch}{\itshape}{\rm}
\spnewtheorem*{thmredux}{Theorem}{\bfseries}{\itshape}
\spnewtheorem*{propredux}{Proposition}{\bfseries}{\itshape}
\newtheorem{prop}{Proposition}[section]
\newtheorem{definition}[prop]{Definition}
\newtheorem{lemma}[prop]{Lemma}
\def\squareforqed{\hbox{\rlap{$\sqcap$}$\sqcup$}}
\def\qed{\ifmmode\squareforqed\else{\unskip\nobreak\hfil
\penalty50\hskip1em\null\nobreak\hfil\squareforqed
\parfillskip=0pt\finalhyphendemerits=0\endgraf}\fi}
\newenvironment{proof}{\rm {\noindent \bfseries Proof:}}{\hspace*{\fill}\qed\par\smallskip}
\newcommand{\srt}[1]{\ensuremath{\mathsf{#1}}}
\renewcommand{\seq}[1]{\ensuremath{\langle#1\rangle}}
\newcommand{\para}[1]{\emph{#1}}
\newcommand{\para}[1]{\paragraph{#1}}
\newtheorem{goal}{Security Goal}
\title{Formal Support for Standardizing\\Protocols with State}
\author{Joshua D. Guttman \and Moses D. Liskov \and \\ John
  D. Ramsdell \and Paul D. Rowe}
\institute{The MITRE Corporation
}
\begin{document}
\maketitle
\ifpubrel\thispagestyle{title}\fi
\begin{abstract}
Many cryptographic protocols are designed to achieve their goals using
only messages passed over an open network. Numerous tools, based on
well-understood foundations, exist for the design and analysis of
protocols that rely purely on message passing. However, these tools
encounter difficulties when faced with protocols that rely on
non-local, mutable state to coordinate several local sessions. 

We adapt one of these tools, {\cpsa}, to provide automated support for
reasoning about state. We use Ryan's Envelope Protocol as an example
to demonstrate how the message-passing reasoning can be integrated
with state reasoning to yield interesting and powerful results.

\medskip
{\bf Keywords:} protocol analysis tools, stateful protocols, TPM, PKCS \#11.
\end{abstract}


\section{Introduction}
\label{sec:intro}

Many protocols involve only message transmission and reception,
controlled by rules that are purely local to a session of the
protocol.  Typical protocols for authentication and key establishment
are of this kind; each participant maintains only the state required
to remember what messages must still be transmitted, and what values
are expected in messages to be received from the peer.  

Other protocols interact with long-term state, meaning state that
persists across different sessions and may control behavior in other
sessions.  A bank account is a kind of long-term state, and it helps
to control the outcome of protocol sessions in the ATM network.
Specifically, the session fails when we try to withdraw money from an
empty account.  Of course, one session has an effect on others through
the state:  When we withdraw money today, there will be less remaining
to withdraw tomorrow.  

Hardware devices frequently participate in protocols, and maintain
state that helps control those protocols.  For example, PKCS\#11
devices store and use keys, and are constrained by key attributes that
control e.g.~which keys may be used to wrap and export other keys.
Trusted Platform Modules (TPMs) maintain Platform Configuration
Registers (PCRs) some of which are modified only by certain special
instructions.  Thus, digitally signing the values in these registers
attests to the history of the platform.  Some protocols involve
multiple state histories; for instance, an online bank transfer
manipulates the state of the destination account as well as the state
of the source account.


State-based protocols are more challenging to analyze than protocols
in which all state is session-local.  Among the executions that are
possible given the message flow patterns, one must identify those for
which a compatible sequence of states exists.  Thus, to justify
standardizing protocols involving PKCS\#11 devices or TPMs, one must
do a deeper analysis than for stateless protocols.  Indeed, since
these devices are themselves standardized, it is natural to want to
define and justify protocols that depend only on their required
properties, rather than any implementation specific peculiarities.  

The goal of this paper is to explain formal ideas that can automate
this analysis, and to describe a support tool that assists with it.

\paragraph{Contributions of this paper.}  We make four main
contributions:
\begin{itemize}
  \item We identify two central axioms of state that formalize the
  semantics of state-respecting behaviors 
  (Def.~\ref{def:bundle:state:respecting}).  Each time a state is
  produced,
  \begin{enumerate}
    \item it can be consumed by at most one subsequent transition.
    \item it cannot be observed after a subsequent transition consumes
    it.
  \end{enumerate}

  The first axiom is the essence of how the state-respecting analysis
  differs from standard message-based analysis.  By contrast, once a
  message has been transmitted, it can be delivered (or otherwise
  consumed) repeatedly in the future.
  
  The second axiom, like the reader/writer principle in concurrency,
  allows observations to occur without any intrinsic order among them,
  so long as they all occur while that state is still available.  It
  preserves the advantages of a partial order model, as enriched with
  state.
  \item An alternative model of execution maintains state in a family
  of traditional state machines, whose transitions are triggered by
  synchronization events in a state-respecting
  manner\ifarchive{}\else{
    (see the extended version~\cite{GLRR2015a} for definitions and
    proofs)}\fi.
  The justification for our two axioms is that they match this
  alternative, explicit-state-machine model exactly.  \ifarchive{We 
    prove this in
    Lemmas~\ref{lemma:execution:to:sr:bundle}--\ref{lemma:sr:bundle:to:execution}.}
  \fi
  \item We incorporated these two axioms into the tool
  {\cpsa}~\cite{cpsa09}, obtaining a tool that can perform
  state-respecting enrich-by-need protocol analysis.
  \item We applied the resulting version of {\cpsa} to an interesting
  TPM-based protocol, the Envelope
  Protocol~\cite{arapinis2011statverif}, verifying that it meets its
  security goal.  We have also analyzed some incorrect variants,
  obtaining attacks.
\end{itemize}

\paragraph{Roadmap.}  After giving some background, we describe the
Envelope Protocol and the TPM behaviors it relies on
(Section~\ref{sec:envelope}).  We introduce our protocol model
(Section~\ref{sec:state_axioms}) in both its plain form, and the form
enriched by the axioms in Contribution~1.
Section~\ref{sec:state_based_analysis} describes the {\cpsa} analysis
in the original model where state propagation is not distinguished
from message-passing, and in the enriched model.  We turn to related
work in Section~\ref{sec:related}.  Section~\ref{sec:goals} addresses
a logical interpretation of enrich-by-need analysis and observes that
this framework may be used, unmodified, for stateful protocols as we
model them.  We end with a brief comment on conclusions and future
work.

\paragraph{Background:  Strand spaces.}  We work within the strand
space framework.  A \emph{strand} is a (usually short) finite
sequence of events, where the events are
\begin{description}
  \item[message transmission] nodes; 
  \item[message reception] nodes; and 
  \item[state synchronization] nodes.  
\end{description}
Each message transmission and reception node is associated with a
message that is sent or received.  State synchronization nodes were
introduced into strand spaces recently~\cite{Guttman12}.  Including
them des not alter key definition such as bundles
(Def.~\ref{def:bundle}), and they allow us to flag events that, though
the protocol principals perform them, are not message events.  State
synchronization nodes will be related to states via two different
models in Section~\ref{sec:state_axioms}.

The behavior of a principal in a single, local run of one role of a
protocol forms a strand.  We call these \emph{regular strands}.  We
also represent basic actions of an adversary as strands, which we call
\emph{adversary strands}.  Adversary strands never need state
synchronization nodes, since our model of the adversary allows it to
use the network as a form of storage that never forgets old messages.  

A protocol $\Pi$ is represented by a finite set of strands, called the
\emph{roles} of the protocol, together with some auxiliary information
about freshness and non-compromise assumptions about the roles.  We
write $\rho\in\Pi$ to mean that $\rho$ is one of the roles of the
protocol $\Pi$.  The \emph{regular strands of} $\Pi$ are then all
strands that result from any roles $\rho\in\Pi$ by applying a
substitution that plugs in values in place of the parameters occurring
in $\rho$.

For more information on strand spaces, see
e.g.~\cite{Guttman10,ThayerHerzogGuttman99}.  For the version
containing state synchronization events as well as transmissions and
receptions, see~\cite{Guttman12,RamsdellDGR14}.

\paragraph{Background:  Enrich-by-need analysis.}  In our form of
protocol analysis, the input is a fragment of protocol behavior.  


The output gives zero or more executions that contain this fragment.
We call this approach ``enrich-by-need'' analysis (borrowed from
our~\cite{Guttman14}), because it is a search process that gradually
adds information as needed to explain the events that are already
under consideration.

An analysis begins with an execution fragment $\skel$, which may, for
instance, reflect the assumption that one participant has engaged in a
completed local session (a strand); that certain nonces were freshly
chosen; and that certain keys were uncompromised.  The result of the
analysis is a set $S$ of executions enriching the starting fragment
$\skel$.  An algorithm implementing this approach is sound if, for
every possible execution $\skelC$ that enriches $\skel$, there is a
member $\skelB\in S$ such that $\skelC$ enriches $\skelB$.

We do not require $S$ to contain all possible executions because there
are infinitely many of them if any.  For instance, executions may
always be extended by including additional sessions by other protocol
participants.  Thus, we want the set $S$ to contain representatives
that cover all of the \emph{essentially different} possibilities.  We
call these representatives $S$ the \emph{shapes} for $\skel$.

In practice, the set $S$ of shapes for $\skel$ is frequently finite
and small.  

When we start with a fragment $\skel$ and find that it has the empty
set $S=\emptyset$ of shapes, that means that no execution contains all
of the structure in $\skel$.  To use this technique to show
confidentiality assertions, we include a disclosure event in $\skel$.
If $\skel$ extends to no possible executions at all, we can conclude
that this secret cannot be revealed.  If $S$ is non-empty, the shapes
are attacks that show how the confidentiality claim could fail.

The set $S$ of shapes, when finite, also allows us to ascertain
whether authentication properties are satisfied.  If each shape
$\skelB\in S$ satisfies an authentication property, then every
possible execution $\skelC$ enriching $\skel$ must satisfy the
property too:  They all contain at least the behavior exhibited in
some shape, which already contained the events that the authentication
property required.

This style of analysis is particularly useful in a partially ordered
execution model, such as the one provided by strand spaces.  In
partially ordered models, when events $e_1,e_2$ are causally
unrelated, neither precedes the other.  In linearly ordered execution
models, both interleavings $e_1\prec e_2$ and $e_2\prec e_1$ are
possible, and must be considered.  When there are many such pairs,
this leads to exponentially many interleavings.  None of the
differences between them are significant.


\section{The Envelope Protocol}
\label{sec:envelope}

We use Mark Ryan's Envelope Protocol~\cite{ArapinisEtAl2011} as a
concrete example throughout the paper. The protocol leverages
cryptographic mechanisms supported by a TPM to allow one party to
package a secret such that another party can either reveal the secret
or prove the secret never was and never will be revealed, but not
both.

It is a particularly useful example to consider because it is
carefully designed to use state in an essential way. In particular, it
creates the opportunity to take either of two branches in a state
sequence, but not both. In taking one branch, one loses the option to
take the other. In this sense, it utilizes the non-monotonic nature of
state that distinguishes it from the monotonic nature of
messages. Additionally, although the Envelope Protocol is not
standardized, it demonstrates advanced and useful ways to use the TPM.
Standardization of such protocols is under the purview of the Trusted
Computing Group (TCG). It will be very useful to understand the
fundamental nature of state and to provide methods and tools to
support the future standardization of protocols involving devices such
as the TPM.

\para{Protocol motivation.}  The plight of a teenager motivates
the protocol.  The teenager is going out for the night, and her
parents want to know her destination in case of emergency.  Chafing at
the loss of privacy, she agrees to the following protocol.  Before
leaving for the night, she writes her destination on a piece of paper
and seals the note in an envelope.  Upon her return, the parents can
prove the secret was never revealed by returning the envelope
unopened. Alternatively, they can open the envelope to learn her
destination.

The parents would like to learn their daughter's destination while
still pretending that they have respected her privacy. The parents are
thus the adversary.  The goal of the protocol is to prevent this
deception.

\para{Necessity of long-term state.}  The long-term state is the
envelope.  Once the envelope is torn, the adversary no longer has
access to a state in which the envelope is intact.  A protocol based
only on message passing is insufficient, because the ability of the
adversary monotonically increases. Initially,
the adversary has the ability to either return the envelope or tear
it. In a purely message-based protocol the adversary will never lose
these abilities.

\para{Cryptographic version.}  The cryptographic version of this
protocol uses a TPM to achieve the security goal.  Here we restrict
our attention to a subset of the TPM's functionality. In particular we
model the TPM as having a state consisting of a single PCR and only
responding to five commands.

A \texttt{boot} command (re)sets the PCR to a known value.  The
\texttt{extend} command takes a piece of data, $d$, and replaces the
current value $s$ of the PCR state with the hash of $d$ and
$s$, denoted $\#(d,s)$.  In fact, the form of
\texttt{extend} that we model, which is an \texttt{extend} within an
encrypted session, also protects against replay.  These are the only
commands that alter the value in a PCR.

The TPM provides other services that do not alter the PCR.  The
\texttt{quote} command reports the value contained in the PCR and is
signed in a way as to ensure its authenticity.  The \texttt{create
  key} command causes the TPM to create an asymmetric key pair where
the private part remains shielded within the TPM.  However, it can
only be used for decryption when the PCR has a specific value.  The
\texttt{decrypt} command causes the TPM to decrypt a message using
this shielded private key, but only if the value in the PCR matches
the constraint of the decryption key.

In what follows, Alice plays the role of the teenaged daughter
packaging the secret. Alice calls the \texttt{extend} command with a
fresh nonce $n$ in an encrypted session.  She uses the \texttt{create
  key} command constraining a new key $k'$ to be used only when a
specific value is present in the PCR.  In particular, the constraining
value $cv$ she chooses is the following:
$$ cv = \#(\cn{\mathtt{\cn{obt}}},\#(n,s)) $$ where $\cn{obt}$ is a
string constant and $s$ represents an arbitrary PCR value prior the
extend command.  She then encrypts her secret $v$ with $k'$, denoted
$\enc{v}{k'}$.

Using typical message passing notation, Alice's part of the protocol
might be represented as follows (where we temporarily ignore the
replay protection for the \texttt{extend} command):\\

\noindent
$$
\begin{array}{c@{{}\to{}}c@{~:~}l}
\mathrm{A} & \mathrm{TPM} & \enc{\mathtt{\cn{ext}},n}{k}\\
\mathrm{A} & \mathrm{TPM} & \mathtt{\cn{create}},
\#(\mathtt{\cn{obt}},\#(n,s)) \\
\mathrm{TPM} & \mathrm{A} & k'\\
\mathrm{A} & \mathrm{Parent} & \enc{v}{k'}\\
\end{array}
$$
The parent acts as the adversary in this protocol. We assume he can
perform all the normal Dolev-Yao operations such as encrypting and
decrypting messages when he has the relevant key, and interacting with
honest protocol participants. Most importantly, the parent can use the
TPM commands available in any order with any inputs he likes. Thus he
can extend the PCR with the string \texttt{obtain} and use the key to
decrypt the secret.  Alternatively, he can refuse to learn the secret
and extend the PCR with the string $\cn{ref}$ and then generate a TPM
quote as evidence the secret will never be exposed. The goal of the
Envelope Protocol is to ensure that once Alice has prepared the TPM
and encrypted her secret, the parent should not be able to both
decrypt the secret and also generate a refusal quote,
$\enc{\cn{\mathtt{\cn{quote}}},
  \#(\cn{\mathtt{\cn{ref}}},\#(n,s)), \enc{v}{k'}}{\fn{aik}}$.

A crucial fact about the PCR state in this protocol is the
collision-free nature of hashing, ensuring that for every $x$
\begin{align} \label{eq:injective} \#(\mathtt{\cn{obt}},\#(n,s))
  \quad \neq \quad \#(\mathtt{\cn{ref}}, x )
\end{align}

\para{Formal protocol model.}

We formalize the TPM-based version of the Envelope Protocol using
strand spaces~\cite{Guttman10}. Messages and states are represented as
elements of a crypto term algebra, which is an order-sorted quotient term algebra.
Sort~$\top$ is the top sort of messages.
Messages of sort~$\srt{A}$ (asymmetric keys), sort~$\srt{S}$
(symmetric keys), and sort~$\srt{D}$ (data) are
called \emph{atoms}.  Messages are atoms, tag constants, or
constructed using encryption $\enc{\cdot}{(\cdot)}$, hashing
$\#(\cdot)$, and pairing $(\cdot,\cdot)$, where the comma operation is
right associative and parentheses are omitted when the context
permits.

We represent each TPM command with a separate role that receives a
request, consults and/or changes the state and optionally provides a
response.  As shown in Fig.~\ref{fig:TPM roles}, we use
$m\!\!\to\!\!\bullet$ and $\bullet\!\!\to\!\!m$ to represent the
reception and transmission of message $m$ respectively. Similarly, we
use $s\!\leadsto\!\!\circ$ and $\circ\!\leadsto\!s$ to represent the
actions of reading and writing the value $s$ to the state.  We write
$m\Rightarrow n$ to indicate that $m$ precedes $n$ immediately on the
same strand.

\begin{figure}
  \begin{trivlist}\item
    \hfil\xymatrix@R=3ex@C=1.1em{
      &\texttt{[re-]boot}&\\
      \ar@{->}[r]^{\cn{boot}}&\bullet\ar@{=>}[d]&\\
          [\ar@{~>}[r]^{s}&]\circ\ar@{~>}[r]^{\cn{s}_0}&
    }\hfil
    \xymatrix@R=3ex@C=1.7em{
      &\texttt{create key}&\\
      \ar@{->}[r]^{\cn{create}, s}&\bullet\ar@{=>}[d]&\\
      &\bullet\ar@{->}[rr]^{\enc{\cn{created},k',s}{\fn{aik}}}&&
    }\hfil
    \xymatrix@R=3ex@C=2em{
      &\texttt{quote}&\\
      \ar@{->}[r]^{\cn{quote},n}&\bullet\ar@{=>}[d]&\\
      \ar@{~>}[r]^{s}&\circ\ar@{=>}[d]&\\
      &\bullet\ar@{->}[rr]^{\enc{\cn{quote},s,n}{\fn{aik}}}&&
    }\hfil\vspace{2.5ex}\item
    \hfil\xymatrix@R=3ex@C=3.2em{
      &&\texttt{extend}&\\
      \ar@{->}[rr]^{\cn{sess},\fn{tpmk},\enc{\fn{esk}}{\fn{tpmk}}}
      &&\bullet\ar@{=>}[d]&\\
      &&\bullet\ar@{=>}[d]\ar@{->}[r]^{\cn{sess},\fn{sid}}&\\
      \ar@{->}[rr]^{\enc{\cn{ext},n,\fn{sid}}{\fn{esk}}}&&\bullet\ar@{=>}[d]&\\
      \ar@{~>}[rr]^{s}&&\circ\ar@{~>}[r]^{\#(n,s)}&
    }\hfil
    \xymatrix@R=3ex{
      &&\texttt{decrypt}&\\
      \ar@{->}[rr]^{\cn{dec},\enc{m}{k'}}&&\bullet\ar@{=>}[d]&\\
      \ar@{->}[rr]^{\enc{\cn{created},k',s}{\fn{aik}}}&&\bullet\ar@{=>}[d]&\\
      \ar@{~>}[rr]^{s}&&\circ\ar@{=>}[d]&\\
      &&\bullet\ar@{->}[r]^{m}&
    }\hfil
  \end{trivlist}

  \caption{TPM roles}\label{fig:TPM roles}
\end{figure}
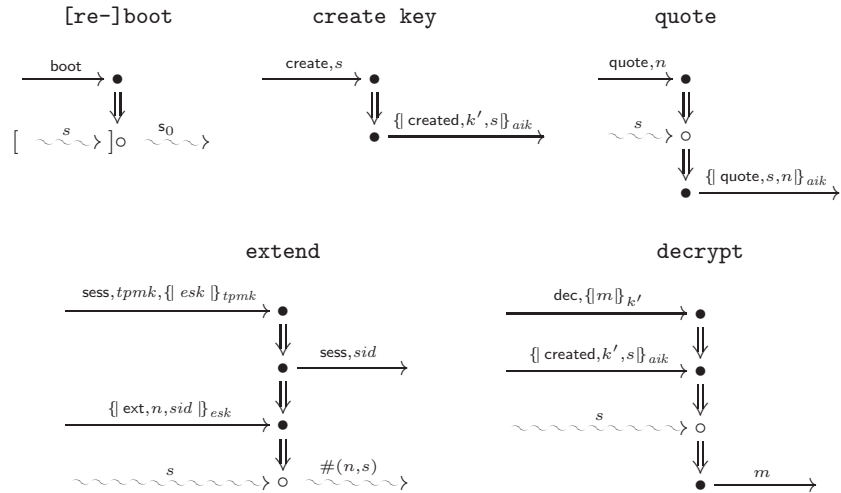

As noted above, the \texttt{boot} role and the \texttt{extend} role
are the only two roles that alter the state. This is depicted with the
single event $\leadsto\!\!\circ\!\leadsto$ that atomically reads and
then alters the state. The \texttt{boot} role receives the command and
resets any current state $s$ to the known value $\cn{s}_0$.  An
alternate version of \texttt{boot} is needed to ensure that our
sequences of state are well-founded.  This version has a single state
write event $\circ\!\leadsto\cn{s}_0$.

The \texttt{extend} role first creates an encrypted channel by
receiving an encrypted session key $\fn{esk}$ which is itself
encrypted by some other secured TPM asymmetric key $tpmk$. The TPM
replies with a random session id $sid$ to protect against replay. It
then receives the encrypted command to extend the value $n$ into the
PCR and updates the arbitrary state $s$ to become $\#(n,s)$.

The \texttt{create key} role does not interact directly with the
state. It receives the command with the argument $s$ specifying a
state. It then replies with a signed certificate for a freshly created
public key $k'$ that binds it to the state value $s$. The certificate
asserts that the corresponding private key $k'^{-1}$ will only be used
in the TPM and only when the current value of the state is $s$. This
constraint is leveraged in the \texttt{decrypt} role which receives a
message $m$ encrypted by $k'$ and a certificate for $k'$ that binds it
to a state $s$. The TPM then consults the state (without changing it)
to ensure it is in the correct state before performing the decryption
and returning the message $m$.

Finally, the \texttt{quote} role receives the command together with a
nonce $n$. It consults the state and reports the result $s$ in a
signed structure that binds the state to the nonce to protect against
replay. 

Since the \texttt{quote} role puts the state $s$ into a message, and
the \texttt{extend} role puts a message into the state, in our
formalization states are the same kind of entity as messages.  



We similarly formalize Alice's actions. Her access to the TPM state is
entirely mediated via the message-based interface to the TPM, so her
role has no state events. It is displayed in Fig.~\ref{fig:Alice role}

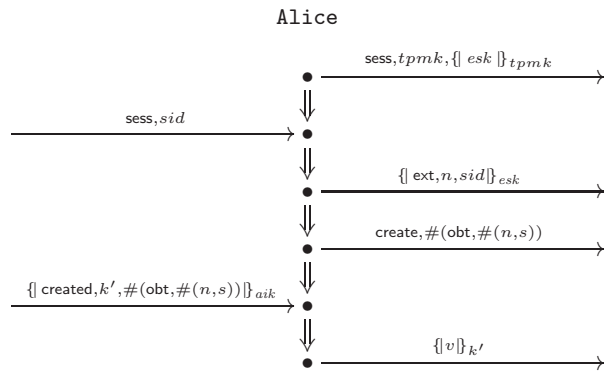
\begin{figure}
  \begin{trivlist}\item
    \hfil\xymatrix@R=3ex@C=10.5em{
      &\texttt{Alice}&\\
      &\bullet\ar@{=>}[d]\ar@{->}[r]^{\cn{sess},tpmk,\enc{\fn{esk}}{tpmk}}&\\
      \ar@{->}[r]^{\cn{sess},sid}&\bullet\ar@{=>}[d]&\\
      &\bullet\ar@{=>}[d]\ar@{->}[r]^{\enc{\cn{ext},n,sid}{\fn{esk}}}&\\
      &\bullet\ar@{=>}[d]\ar@{->}[r]^{\cn{create},
        \#(\cn{obt},\#(n,s))}&\\
      \ar@{->}[r]^{\enc{\cn{created},k',\#(\cn{obt},\#(n,s))}{\fn{aik}}~}&
      \bullet\ar@{=>}[d]&\\
      &\bullet\ar@{->}[r]^{\enc{v}{k'}}&
    }\hfil
  \end{trivlist}
  \caption{Alice's role}\label{fig:Alice role}
\end{figure}

Alice begins by establishing an encrypted session with the TPM in
order to extend a fresh value $n$ into the PCR. She then has the TPM
create a fresh key that can only be used when the PCR contains the
value $\#(\cn{obt},\#(n,s))$, where $s$ is whatever value was in the
PCR immediately before Alice performed her extend command. Upon
receiving the certificate for the freshly chosen key, she uses it to
encrypt her secret $v$ that gives her destination for the night.

The parents may then either choose to further extend the PCR with the
value $\cn{obt}$ in order to enable the decryption of Alice's secret,
or they can choose to extend the PCR with the value $\cn{ref}$ and get
a quote of that new value to prove to Alice that they did not take the
other option.  The adversary roles displayed in
Fig.~\ref{fig:adversary roles} constrain what the parents can do. 

\begin{figure}
  \begin{center}
    \begin{tabular}{ccccc}
      \xymatrix@R=3ex@C=1.2em{
        \texttt{create}\\
        \bullet\ar[r]^{a}&
      }&
      \xymatrix@R=3ex@C=1.6em{
        &\texttt{pair}\\
        \ar[r]^{x}&\bullet\ar@{=>}[d]\\
        \ar[r]^{y}&\bullet\ar@{=>}[d]\\
        &\bullet\ar^{(x,y)}[r]&
      }&
      \xymatrix@R=3ex@C=1.6em{
        &\texttt{sep}\\
        \ar[r]^{(x,y)}&\bullet\ar@{=>}[d]\\
        &\bullet\ar^{x}[r]\ar@{=>}[d]&\\
        &\bullet\ar^{y}[r]&
      }&
      \xymatrix@R=3ex@C=1.6em{
        &\texttt{enc}\\
        \ar[r]^{x}&\bullet\ar@{=>}[d]\\
        \ar[r]^{k}&\bullet\ar@{=>}[d]\\
        &\bullet\ar^{\enc{x}{k}}[r]&
      }&
      \xymatrix@R=3ex@C=1.6em{
        &\texttt{dec}\\
        \ar[r]^{\enc{x}{k}}&\bullet\ar@{=>}[d]\\
        \ar[r]^{k^{-1}}&\bullet\ar@{=>}[d]\\
        &\bullet\ar^{x}[r]&
      }
    \end{tabular}
  \end{center}
  \caption{Adversary roles, where $a$ in the \texttt{create} role must
  be an atomic message.}\label{fig:adversary roles}
\end{figure}
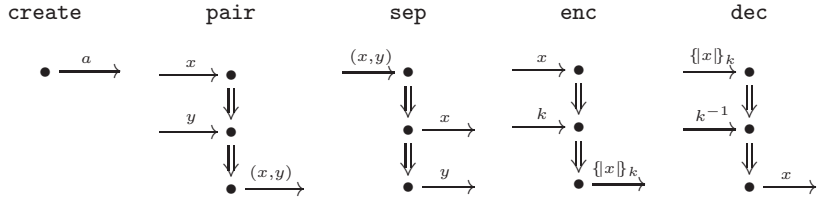

It is important to note that,
like Alice's role, the adversary roles do not contain any state
events. Thus the adversary can only interact with the state via the
interface provided by the TPM commands.

We aim to validate a particular security goal of the Envelope Protocol
using the enrich-by-need method. The parent should not be able to both
learn the secret value $v$ and generate a refusal token.

\begin{goal}\label{goal:alice}
  Consider the following events:
  \begin{itemize}
    \item An instance of the Alice role runs to completion, with
    secret $v$ and nonce $n$ both freshly chosen;
    \item $v$ is observed unencrypted;
    \item the refusal certificate
      $\enc{\cn{\mathtt{\cn{quote}}},
      \#(\cn{\mathtt{\cn{ref}}},\#(n,s)),
      \enc{v}{k'}}{\fn{aik}}$
      is observed unencrypted.
  \end{itemize}
  These events, which we call jointly $\skel_0$, are not all present
  in any execution.
\end{goal}


\section{State-respecting bundles}
\label{sec:state_axioms}

In this section, we introduce a model of protocol behavior in the
presence of global state; it is new in this paper.  It enriches the
notion of a bundle, which is the longstanding strand space
formalization of global
behaviors~\cite{ThayerHerzogGuttman99,Guttman10}.  

We organize this section as a sequence of refinements, starting from
the traditional strand space bundle notion (Def.~\ref{def:bundle}).
We then give a direct generalization, \emph{state-enriched bundles}
(Def.~\ref{def:bundle:enriched}) to associate states with
synchronization events, and to track their propagation.  We then
\ifarchive{introduce (Def.~\ref{def:b:c:phi})} 
\else{informally describe} 
\fi
the notion of an \emph{execution}, which explicitly includes both a
bundle (as a global record of events and their causal ordering) and a
family of state histories, and note that state-enriched bundles are
not restrictive enough to match this notion of execution.  This
motivates the two axioms of state, leading to our final model of
stateful protocol executions, \emph{state-respecting bundles}
(Def.~\ref{def:bundle:state:respecting}), which matches the notion of
executions.  
\ifarchive{We then prove the match between the two definitions in
  Lemmas~\ref{lemma:execution:to:sr:bundle}--\ref{lemma:sr:bundle:to:execution}.}
\else{See the extended version of this paper~\cite{GLRR2015a} for
  formal definitions of executions and a proof of our claim that
  state-respecting bundles and executions match.}  
\fi
\begin{definition}[Bundle] 
  \label{def:bundle}
  Suppose that $\Sigma$ is a finite set of strands.  Let $\Rightarrow$
  be the strand succession relation on $\nodes(\Sigma)$.  Let
  $\rightarrow\subseteq\nodes(\Sigma)\times\nodes(\Sigma)$ be any
  relation on nodes of $\Sigma$ such that $n_1\rightarrow n_2$ implies
  that $n_1$ is a transmission event, $n_2$ is a reception event, and
  $\msg(n_1)=\msg(n_2)$.

  $\bnd=(\mathcal{N},\rightarrow)$ is a \emph{bundle} over $\Sigma$
  iff $\mathcal{N}\subseteq\nodes(\Sigma)$, and
  \begin{enumerate}
    \item If $n_2\in\mathcal{N}$ and $n_1$ precedes it on the same
    strand in $\Sigma$, then $n_1\in\mathcal{N}$; 
    \item If $n_2$ is a reception node, there is exactly one
    $n_1\in\mathcal{N}$ such that $n_1\rightarrow n_2$; and
    \item The transitive closure $(\Rightarrow\cup\rightarrow)^+$ of
    the two arrow relations is acyclic.
  \end{enumerate}
  $\bnd$ is a bundle \emph{of protocol} $\Pi$ iff every strand with
  nodes in $\bnd$ is either an instance of a role of $\Pi$, or else an
  instance of one of the adversary roles in 
  Fig.~\ref{fig:adversary roles}.
\end{definition}
Any finite behavior should have these properties, since otherwise some
participant starts a role of the protocol in the middle, or receives a
message no one sent, or else the (looping) pattern of events is
causally impossible.  By acyclicity, every bundle determines a partial
ordering $\preceq_\bnd$ on its nodes, where $n_1\preceq_\bnd n_2$
means that some path of one or more arrows $\rightarrow,\Rightarrow$
leads from $n_1$ to $n_2$ in $\bnd$.

We incorporate state transition histories directly into the bundles.
To do this, we enrich the bundles with a new relation $\leadsto$ that
propagates the current state from one event to another.  We do this so
that our analysis method can work with a single object that has both
message dependencies and state dependencies within it.  We also
distinguish between \emph{state transitions} and \emph{state
  observations}.  Transitions need to be linearly ordered if they
pertain to a single device, but many state observations may occur
between a single pair of state transitions.  They are like \emph{read}
events in parallel computation:  There is no need for concurrency
control to sequentialize their access to the state, as long as they
are properly nested between the right transition events.

This is an advantage of the strand space approach, which focuses on
partially ordered execution models.  It is important for
enrich-by-need analysis, where the exponential number of interleavings
must be avoided.

Later in this section, we will introduce a model containing a number
of traditional state machines, where we correlate the synchronization
nodes with transitions in their state histories.  We make this model
more rigorous in
\ifarchive{Section~\ref{sec:executions:b:c:phi}}\else{the extended
  version of this paper~\cite{GLRR2015a}}\fi, where we prove an exact
match between the state respecting behaviors we use here and the more
traditional model of state machine histories.


%

\subsection{Enriching Bundles with State}
\label{sec:state_axioms:state:enriched}

We now enrich the bundles to incorporate states, and to propagate them
from node to node, just as transmissions and receptions propagate
messages.



The diagrams in Section~\ref{sec:envelope} suggest a way to
incorporate state into bundles:  We enrich them so that each state
synchronization event is associated with messages representing states.
A transition event is associated with a pair, representing the
pre-state before the transition together with the post-state after it.
The pre-state must be obtained from an earlier synchronization event.
The post-state is produced by the transition, and may thus be passed
to later events.  We also now distinguish state observation events;
these are associated with a single state, which is like a pre-state
since it is received from an earlier event that produced it.  We also
identify initiation events, which initialize a devices state and serve
as the beginning of a state computation history.

\begin{description}
  \item [Initiation] nodes $\circ\!\leadsto\!s$ record the event of
  creating a new state.  We use $\stinit s$ to indicate an initiation
  of state to $s$.
  \item[Observation] nodes $s\!\leadsto\!\circ$ record the current
  state without changing it.  We use $\obsv s$ to indicate an
  observation of state $s$.
  \item[Transition] nodes $s_0\!\leadsto\!\circ\!\leadsto\!s_1$
  represent the moment at which the state changes from a specific
  pre-state to a specific post-state.  We use $\tran(s_0,s_1)$ to
  indicate a state transition with pre-state $s_0$ and post-state
  $s_1$.
\end{description}
In specifying protocols and their state manipulations, we can use the
style illustrated in Fig.~\ref{fig:TPM roles}.  There, an observation
such as the synchronization node in the \texttt{quote} role, acquires
a message on the incoming $\leadsto$ arrow.  In this case, it is a
variable $s$, which is itself a parameter to the role which
contributes to the subsequent transmitted message.  The
\texttt{decrypt} role also has an incoming $\leadsto$ arrow labeled
with $s$; in this case, the role can proceed to engage in this event
only if the value $s$ equals a previously available parameter acquired
in the previous reception node.  The \texttt{extend} role has a
transition node, in which any pre-state $s$ will be updated to a new
post-state by hashing in the parameter $n$.  

These pre- and post-state annotations, using parameters that appear
elsewhere in the roles, determine subrelations of the transition
relation associated with each instance of a role.  An instance of the
\texttt{extend} role with a particular value $n_0$ for the parameter
$n$ will engage only in state transformations that hash in that value
$n_0$.    

Observation events are not strictly necessary; we could model the
checking of a state value as a transition
$s\!\leadsto\!\circ\!\leadsto\!s$.  However, this would require
observation events be ordered in a specific sequence.  This violates
the principled choice that our execution model not include unnecessary
ordering.

In the Introduction, we defined a protocol to be a finite set of
strands called the \emph{roles} of the protocol.  An \emph {enriched
  protocol} $\Pi^+$ will be a protocol $\Pi$ enriched with a
classification of its state synchronization events into
$\xstinit,\xtran$, and $\xobsv$ nodes, with each of those annotated
with messages defining their pre- and post-states.  The \emph{regular
  strands} of $\Pi^+$ are all of the substitution instances of the
roles of $\Pi^+$, including the instances of the pre- and post-states
on the synchronization nodes.  

An enriched bundle uses $\leadsto$ arrows to track the propagation of
the state of each device involved in the behavior.  This is not a
sufficient model for reasoning about state, which requires also the
two axioms of Defn.~\ref{def:bundle:state:respecting}, but it provides
the objects from which we will winnow the state-respecting bundles.

\begin{definition}[Enriched bundles] 
  \label{def:bundle:enriched}
  $\bnd^+=(\mathcal{N},\rightarrow,\leadsto)$ is an \emph{enriched
    bundle} iff $(\mathcal{N},\rightarrow)$ is a bundle, and moreover:

  \begin{enumerate}
    \item $n_1\leadsto n_2$ implies that $n_1$ is an $\xstinit$ or
    \label{clause:leadsto:init}
    $\xtran$ event and $n_2$ is an $\xobsv$ or $\xtran$ event, and the
    post-state of $n_1$ equals the pre-state of $n_2$;
    \item For each $\xobsv$ or $\xtran$ event $n_2$, there exists
    \label{clause:leadsto:unique}
    a unique $n_1$ such that $n_1\leadsto n_2$;
    \item The transitive closure
    $(\Rightarrow \cup \to \cup \leadsto)^+$ of the three arrow
    relations is acyclic.  We refer to the partial order it determines
    as $\prec_{\bnd^+}$ or $\prec$ when ${\bnd^+}$ is clear.
  \end{enumerate}
\end{definition}
%

Enriched bundles are not a sufficient execution model, however,
because they do not capture what is essentially different about state
as compared to messages: the way that the next transition event
consumes a state value, such that it cannot be available again unless
a new transition creates it again.  We can see this by connecting our
current set-up to a state-machine model.  

\ifarchive{
\label{sec:executions}

\subsection{Bundles with Explicit Computations}
\label{sec:executions:b:c:phi}

We introduce a formal model of executions, where protocol behavior
drives state machine executions, as briefly introduced in
Section~\ref{sec:state_axioms}.  Message transmissions and receptions
occur alongside the state transition histories of zero or more
stateful devices.  The message behavior here satisfies the usual
bundle properties for protocol behavior (see Def.~\ref{def:bundle}).
Some events do not send or receive messages, but synchronize with the
state of one or more devices.  Thus, a bundle together with a family
of state transition histories counts as a possible execution if the
steps of the state transition histories match with the state
synchronization events in the bundle.  This model is adapted from our
earlier work~\cite{Guttman12}.

When a protocol executes in coordination with devices that maintain
state, the execution must have the structure of a bundle, as far as
the message-passing behavior is concerned, and must \emph{also} meet
the constraints that the devices impose.  Each device must undergo a
possible state transition history, and each transition should be
caused by something, namely by some state synchronization event in the
bundle.  

%
\begin{definition}[Computations]
  Fix a set of states $\states$ with a distinguished subset
  $\initstates\subseteq\states$ of initial states and a transition
  relation $\trans\subseteq\states\times\states$.
  \begin{enumerate}
    \item A \emph{state transition history} or \emph{computation} is a
    finite sequence of states $\compute=\seq{s_0,s_1,\ldots,s_\ell}$
    starting with the initial state $s_0\in\initstates$ and, for every
    $j$, if $0\le j<\ell$, then $s_j\trans s_{j+1}$.
    \item Regard $\{D_i\}_{i\in I}$ as a family of instances of the
    state machine, indexed by the members of an index set $I$.  

    A $\{D_i\}_{i\in I}$\emph{-family} of computations is a family
    $\{\compute_i\}_{i\in I}$ of computations indexed by the set $I$.
  \end{enumerate}
\end{definition}
Thus, in our model all of the devices are instances of a single type
of state machine, with its state space and transition relation fixed.
This is not a real limitation, since given a family of different types
of state machines, we can construct a single encompassing machine
type.  Its states are the disjoint union (or tagged union) of the
states of the individual machines.  The transition relation is also a
union of the individual machines' transition relations.  The initial
state of any particular run of a machine then determines which
component machine it will simulate thereafter.

A \emph{correlation} is a function that delivers a synchronization
node in a bundle for each step in a computation in some family.
\begin{definition}[Correlations]
  \label{def:correlation}
  Let $\bnd$ be a bundle, with synchronization nodes
  $\syncnodes(\bnd)$, and let $\{\compute_i\}_{i\in I}$ be a
  $\{D_i\}_{i\in I}$-family of computations.  

  A \emph{position} $p=i,j$ for $\{\compute_i\}_{i\in I}$ is a pair
  such that $i\in I$ and $0\le j<\length{\compute_i}$.  Let $\Pos$ be
  the set of all positions for $\{\compute_i\}_{i\in I}$.

  A \emph{correlation} $\phi\colon \Pos\rightarrow\syncnodes(\bnd)$ is
  a function from positions in the computation family to
  synchronization nodes of the bundle and such that:
  \begin{enumerate}
    \item $\ran(\phi)=\syncnodes(\bnd)$, i.e.~$\phi$ is surjective
    onto the synchronization nodes; and
    \item $\phi$ is consistent with the bundle ordering
    $\prec_\bnd$: \label{clause:weak:order}
    i.e.~let $R(n,n')$ mean that there exist $i,j,k$ with $j<k$,
    $n=\phi(i,j)$, and $n'=\phi(i,k)$, and require:
    $$ (\prec_\bnd\cup\, R)^+ \mbox{ is acyclic} .  $$
  \end{enumerate}
  A correlation $\phi$ is \emph{injective} iff $\phi(i,j)=\phi(i',j')$
  implies $i=i'$ and $j=j'$.    
\end{definition}
In general, the same node $n$ may synchronize with positions in
several different computations ${\compute_i}$; an \emph{injective}
correlation does not exercise this possibility.  
%

Typically, one would like to correlate nodes and state transitions
more tightly, so that each synchronization node in a role causes a
specific type of transition.  In this context, a ``type'' of
transition simply means a subset of the transition relation.  The
subset can also depend on the parameter values for the node in
question.  The set $T$ of pairs
\begin{eqnarray*}
  \{n,\sigma & \colon & n\in\syncnodes(\Pi),\; \\ 
             && \mbox{ substitution } \sigma \mbox{ assigns values to the
                parameters of }\rho\} 
\end{eqnarray*}
indexes a family of subrelations $\trans^{n,\sigma}$ of a transition
relation $\trans$, i.e.~$\trans^{n,\sigma}\subseteq\trans$.
\begin{definition}[Execution] 
  \label{def:b:c:phi} 
  Let $\bnd$ be a $\Pi$-bundle; let $\{\compute_i\}_{i\in I}$ be a
  $\{D_i\}_{i\in I}$-family of computations; and let $\phi$ be a
  correlation between them.  For each $i\in I$, let
  $\trans^{n,\sigma}$ be a family of subrelations of $\trans$.

  \begin{description}
    \item[$(\bnd,\{\compute_i\}_{i\in I},\phi)$] is a
    $\Pi,\trans$-\emph{execution} acting on the devices
    $\{D_i\}_{i\in I}$, subject to the subrelations
    $\trans^{n,\sigma}$, iff for every $n'\in\syncnodes(\bnd)$,
    \begin{description}
      \item[if] $n'=\phi(i,j)=\sigma(n)$ is an instance of role node
      $n$ under substitution $\sigma$, 
      \item[then,] letting $\compute_i=\seq{s_0,s_1,\ldots,s_\ell}$,
      \\
      (i) when $j=0$, then $n$ is $\stinit s$ and $s_0=\sigma(s)$; and \\
      (ii) when $j>0$, then $s_{j-1}\trans^{n,\sigma}s_j$.
    \end{description}
  \end{description}
\end{definition}
This model, which we sometimes call the $\bnd,\compute,\phi$
model, is a general view of how the events in a protocol execution can
drive the transitions of a family of devices.  As discussed in our
previous~\cite{Guttman12}, it accounts both for events in which the
protocol execution receives information out of the state and also for
events in which the protocol execution deposits information into the
state.  There are two main changes here vis-a-vis~\cite{Guttman12}.
First, we allow many devices to have separate state histories.
Second, we omit the ``labels'' that were attached to synchronization
nodes there, instead using the subrelations $\trans^{n,\sigma}$ to
correlate specific protocol events with types of state transition.

}
\fi

Each enriched protocol $\Pi^+$ determines a type of state machine.
Its states (included in the set of messages) are all pre-states and
post-states of the synchronization nodes of all instances of the roles
of $\Pi^+$.  A state machine has a set of initial states.  In the
state machine determined by $\Pi^+$, the initial states are the states
$\sigma(s)$ such that some role $\rho\in\Pi^+$ has an initiation event
$\stinit s$, and  $\sigma$ is a substitution determining an instance
of $\rho$.   

The state machine determined by $\Pi^+$ has the state transition
relation $\trans$ consisting of all pairs of states $(s_1,s_2)$ where
\begin{description}
  \item[$s_1\trans s_2$] iff there exists a state transition node of
  $\Pi^+$ with pre-state $t_1$ and post-state $t_2$ and a substitution
  $\sigma$, such that $s_1=\sigma(t_1)$ and $s_2=\sigma(t_2)$.  
\end{description}
A \emph{state history} or \emph{computation} is a finite or infinite
sequence of states $s_0,s_1,\ldots$ that starts with an initial state
$s_0$, and, for every $i$, if $s_{i+1}$ is defined then
$s_i\trans s_{i+1}$.

\ifarchive{}\else{
  There may be a collection of devices $\{D_i\}_{i\in I}$ that
  instantiate this type of state machine.  A \emph{execution} consists
  of a bundle $\bnd$ (Def.~\ref{def:bundle}) together with a
  state history for each device $\{D_i\}_{i\in I}$, where each
  transition is caused by a state synchronization node of $\bnd$.
}\fi

The enriched bundles are not a sufficient model for reasoning about
state, because there are enriched bundles that do not correspond to
any execution in this sense.  We will illustrate this in
Section~\ref{sec:state_based_analysis}.

\subsection{Our Axioms of State}
\label{sec:state_axioms:axioms}

The initiation and transition events are meant to describe the
sequence of states that a device passes through.  The notion of bundle
says nothing about the ``out-degree'' of an event.  A message
transmission event can satisfy more than one message reception.
However, a state event (initiation or transition) can satisfy \emph{at
  most one} state transition event.

Observations must occur in a constrained place in the sequence of
states.  They acquire an incoming $\leadsto$ arrow from a transition
or an initiation.  Any such observation occurs before a subsequent
change in the state.  

These two principles---that transitions do not fork, and observations
must precede a transition that consumes their state---motivate our
execution model.  They are illustrated in
Fig.~\ref{fig:sequential semantics}.
\begin{figure}[tb]
  \begin{center}\tabcolsep=2em
    \begin{tabular}{cc}
      \xymatrix@R=5ex@C=.7em{
      (1)  &\circ\ar@{~>}[dr]\ar@{~>}[dl]\\ \xtran&=&\xtran
      } & 
      \xymatrix@R=5ex@C=.7em{
      (2)  &\circ\ar@{~>}[dr]\ar@{~>}[dl]\\ \xobsv & \prec & \xtran
      }
    \end{tabular}
  \end{center}
  \caption{State-respecting semantics.  (1) State produced (either from a
    $\xtran$ or $\xstinit$ event) cannot be consumed by
    two distinct transitions.  (2) Observation
    occurs after the state observed is produced but before that state is
    consumed by a subsequent transition.}\label{fig:sequential semantics}
\end{figure}
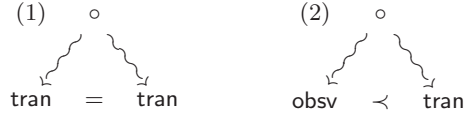

\begin{definition}[State-respecting bundle] 
  \label{def:bundle:state:respecting} 
  Let $\bnd^+=(\mathcal{N},\rightarrow,\leadsto)$ be an enriched
  bundle with precedence order $\prec$.  $\bnd^+$ is
  \emph{state-respecting} if and only if:
\begin{enumerate}
  \item\label{item:no state split} if $n \leadsto n_0$ and
  $n \leadsto n_1$, where $n_0$ and $n_1$ are $\xtran$ events, then
  $n_1 = n_0$;
  \item\label{item:obsv ordered} Let the relation $\prec^+$ be the
  smallest transitive relation including $\prec$
  such that
  whenever $n_0$ is an $\xobsv$ and $n_1$ is a $\xtran$, then
  \begin{equation}
    \label{eq:obs:tran:prec}
    n \leadsto n_0 \mbox{ and } n \leadsto n_1 \mbox{ implies } n_0
    \prec^+ n_1 . 
  \end{equation}
  Then $\prec^+$ is acyclic.
\end{enumerate}
\end{definition}
We call Clause~\ref{item:no state split} the \emph{No State Split
  Principle}.  Clause~\ref{item:obsv ordered} is the \emph{Observation
  Ordering Principle}.

These two axioms are adequate to provide a model of state.  In
particular, \ifarchive{}\else{
  in the extended version of this paper~\cite{GLRR2015a}, }\fi
we \ifarchive{now }\fi 
prove that the executions in the sense we formalize there correspond
exactly to the state-respecting bundles of
Def.~\ref{def:bundle:state:respecting}.  
\ifarchive{\par
  \subsection{Relating State-Enriched Bundles to Executions}
\label{sec:executions:state:enriched}

The extended protocols $\Pi^+$ and state-respecting bundles relate
easily to the $\bnd,\compute,\phi$ model.  

We now define a state machine in terms of the state synchronization
nodes of $\Pi^+$.  The set of states is a subset of the messages,
namely all those that can appear as pre-state or post-state of any
synchronization node.  Letting $\sigma$ range over substitutions and
$n$ over all synchronization nodes of $\Pi$, $\states=$
$$\begin{array}{cll}
  &  \{\sigma(s)\colon & n \mbox{ is }\stinit s\} \\
  \cup & \{\sigma(s_1), \sigma(s_2) \colon & n \mbox{ is }\tran
         (s_1,s_2)\} \\
  \cup & \{\sigma(s)\colon & n \mbox{ is }\obsv s\} . 
\end{array}
$$
$\initstates$ is the set given on the first line, the states
$\sigma(s)$ such that $n$ is $\stinit s$.  The transition relation
$\trans$ is determined from the non-initiation synchronization nodes
of $\Pi^+$.  If a node $n$ on a role of $\Pi^+$ is an
\begin{description}
  \item[Observation] node $n=t\!\leadsto\!\circ$, then
  $\trans^{n,\sigma}$ is the singleton
  $\{\seq{\sigma(t),\sigma(t)}\}$, in which the post-state is
  unchanged.
  \item[Transition] node $n=t_0\!\leadsto\!\circ\!\leadsto\!t_1$, then
  $\trans^{n,\sigma}$ is the singleton
  $\{\seq{\sigma(t_0),\sigma(t_1)}\}$.
\end{description}
Then the transition relation $\trans(\Pi^+)$ for the devices $D_i$ is
defined to be the union
$$ \bigcup_{n,\sigma}\trans^{n,\sigma} , 
$$
taking the union over all $\xobsv,\xtran$ nodes
$n\in\syncnodes(\Pi^+)$, and all substitutions $\sigma$.  We consider
executions relative to the family of subrelations $\trans^{n,\sigma}$.

If $\Pi^+$ is an extended protocol, let $\fgt(\Pi^+)$ result from it
by forgetting the pre- and post-state annotations on all
synchronization nodes.  We will refer to a node on some role
$\rho\in\fgt(\Pi^+)$ as an $\xstinit,\xobsv$, or $\xtran$ node of
$\fgt(\Pi^+)$ if it results from a node of the same kind in $\Pi^+$ by
forgetting.

If $\bnd^+=(\mathcal{N},\rightarrow,\leadsto)$ is an enriched bundle
for $\Pi^+$, then $\fgt(\bnd^+)$ is the $\Pi$-bundle
$(\mathcal{N}',\rightarrow')$ where $\mathcal{N}'$ results from
$\mathcal{N}$ by forgetting the pre- and post-state annotations on the
synchronizations, and $\rightarrow'$ relates two nodes in
$\mathcal{N}'$ iff $\rightarrow$ related their preimages in
$\mathcal{N}$.

%
%

\begin{lemma}
  \label{lemma:execution:to:sr:bundle}
  Given an extended protocol $\Pi^+$, let $\Pi=\fgt(\Pi^+)$ and
  $\trans=\trans(\Pi^+)$.  Let $\bnd,\{\compute_i\}_{i\in I},\phi$ be
  a $\Pi,\trans$-execution, with injective $\phi$.

  There exists a state-respecting bundle $\bnd^+$ of $\Pi^+$ such that
  $\fgt(\bnd^+)=\bnd$.
\end{lemma}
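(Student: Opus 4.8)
The plan is to leave the underlying bundle $\bnd$ untouched and to build $\bnd^+$ by (i) re-attaching pre- and post-state annotations to the synchronization nodes of $\bnd$ and (ii) adding the missing $\leadsto$ relation. Both ingredients are dictated by the computation family through $\phi$. Since $\phi$ is surjective onto $\syncnodes(\bnd)$ and, by hypothesis, injective, every synchronization node $n'$ has a unique position $(i,j)=\phi^{-1}(n')$; write $\compute_i=\seq{s_0,\ldots,s_\ell}$. For the annotation, case (i) of Def.~\ref{def:b:c:phi} makes $\phi(i,0)$ always an instance of an initiation role node, which we annotate with post-state $s_0$; case (ii), together with the fact that the subrelations $\trans^{n,\sigma}$ are defined only for observation and transition role nodes, makes $\phi(i,j)$ for $j>0$ an instance of an observation node (annotated with the single state $s_{j-1}=s_j$) or of a transition node (annotated with pre-state $s_{j-1}$ and post-state $s_j$). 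By construction $\fgt(\bnd^+)=\bnd$; the one point needing care is that these annotations assemble, strand by strand, into genuine substitution instances of the enriched roles of $\Pi^+$, so that $\bnd^+$ is a bundle \emph{of} $\Pi^+$ --- this holds because the execution is defined relative to exactly the subrelations $\trans^{n,\sigma}$ that bind each synchronization node to the annotation of its role node.

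To define $\leadsto$, for a synchronization node $n_2$ at position $(i,j)$ with $j>0$ let $k<j$ be the largest index such that position $(i,k)$ is an instance of a transition node or $k=0$; this is well defined since $k=0$ always qualifies. By maximality of $k$, every position strictly between $k$ and $j$ is an observation, and observations leave the state unchanged, so $s_k=s_{k+1}=\cdots=s_{j-1}$; moreover $\phi(i,k)$ is an initiation node if $k=0$ and a transition node otherwise, and in either case its post-state is $s_k=s_{j-1}$, which is the pre-state of $n_2$. Put $\phi(i,k)\leadsto n_2$. Clauses~\ref{clause:leadsto:init} and~\ref{clause:leadsto:unique} of Def.~\ref{def:bundle:enriched} are then immediate, uniqueness using that $\phi$ is a bijection onto $\syncnodes(\bnd)$. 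For the acyclicity requirement of Def.~\ref{def:bundle:enriched}, observe that $\phi(i,k)\leadsto n_2$ implies $R(\phi(i,k),n_2)$ in the sense of Def.~\ref{def:correlation} (witnesses $i$ and $k<j$), so $\prec_{\bnd^+}=(\Rightarrow\cup\to\cup\leadsto)^+\subseteq(\prec_\bnd\cup R)^+$, and the latter is acyclic by clause~\ref{clause:weak:order} of Def.~\ref{def:correlation}.

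It remains to check the two axioms of Def.~\ref{def:bundle:state:respecting}. For the No State Split Principle, suppose $n\leadsto n_0$ and $n\leadsto n_1$ with $n_0,n_1$ transition nodes; injectivity of $\phi$ forces $n_0,n_1$ into the same computation $\compute_i$ with a common producer index $k$, after which $n_0$ and $n_1$ occupy transition positions exceeding $k$ with no transition position strictly between $k$ and themselves, whence they coincide. For the Observation Ordering Principle, suppose $n\leadsto n_0$ with $n_0$ an observation at position $(i,j)$ and $n\leadsto n_1$ with $n_1$ a transition at position $(i,j')$, sharing producer index $k$; since all positions strictly between $k$ and $j'$ are observations while position $j'$ is a transition, $j'$ is the least transition position above $k$, and since $j$ is an observation position above $k$ we get $j<j'$, i.e.\ $R(n_0,n_1)$. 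Hence every pair adjoined in forming $\prec^+$ already lies in $R$, so $\prec^+\subseteq(\prec_\bnd\cup R)^+$ is acyclic.

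I expect the main obstacle to be two bookkeeping points rather than any deep difficulty: getting the definition of $\leadsto$ right so that it hops over intervening observation nodes to the node that actually produced the state in question, and the coherence argument that the annotations read off from $\{\compute_i\}_{i\in I}$ glue into substitution instances of the enriched roles. Everything else is a direct unwinding of the definitions, powered by the bijectivity of $\phi$ onto $\syncnodes(\bnd)$ and the order-consistency clause~\ref{clause:weak:order} of a correlation.
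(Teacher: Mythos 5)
Your proposal is correct and follows essentially the same route as the paper's proof: annotate each synchronization node from its unique $\phi$-preimage, draw a $\leadsto$ arrow from the most recent initiation/transition position in the same computation, and reduce both the enriched-bundle acyclicity and the Observation Ordering Principle to the order-consistency clause of the correlation via the relation $R$. The only cosmetic difference is that you define the $\leadsto$ source in closed form (largest transition index $k<j$, or $0$) where the paper defines it recursively by inheriting the source through intervening observation nodes and records the same facts as explicit invariants.
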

\begin{proof}
  Suppose that $\bnd,\{\compute_i\}_{i\in I},\phi$ is an execution.
  For each synchronization node in $\bnd$, we must decorate it with
  pre- and post-states (depending on its kind) from
  $\{\compute_i\}_{i\in I}$ and $\leadsto$ arrows, obtaining a
  state-respecting bundle $\bnd^+$.  It will then be immediate that
  $\fgt(\bnd^+)=\bnd$, since we will only add arrows and annotations
  that $\fgt$ discards.

  The correlation $\phi$ tells us how to decorate the nodes.  Since
  $\phi$ is surjective onto $\syncnodes(\bnd)$, every
  $n\in\syncnodes(\bnd)$ will be annotated.  Since $\phi$ is
  injective, there is no risk of conflict between two different
  computation steps.

  \medskip\noindent\textbf{Construction:}  We consider each
  computation $\compute_i$ and work recursively on steps $j$, where
  $0\le j<\length{\compute_i}$, within $\compute_i$.

  In the base case, when $j=0$, we know that the value $\compute_i(0)$
  was an initial state $s_0\in\initstates$; by Def.~\ref{def:b:c:phi},
  we know that $\phi(i,0)$ is an instance of an initiation node of
  $\Pi$.  Thus, we decorate $\phi(i,0)$ with
  $\stinit (\compute_i(0))$.

  Suppose, for the step case, that $j>0$.  We need now to decorate
  $\phi(i,j)$ with a pre-state and possibly post-state, and we need to
  provide it with an incoming $\leadsto$ arrow.

  For the $\leadsto$ arrow, if $\phi(i,j-1)$ is an initiation or
  transition node, we add an arrow $\phi(i,j-1)\leadsto\phi(i,j)$.  If
  $\phi(i,j-1)$ is an observation node, then it has an incoming arrow
  from some $n_1\leadsto\phi(i,j-1)$, and we add an arrow from the
  same $n_1\leadsto\phi(i,j)$.

  If $\phi(i,j)$ is an instance of an observation node of $\Pi$, by
  the decomposition of $\trans$ into subrelations, we know that
  $\compute_i(j-1)=\compute_i(j)$.  We decorate $\phi(i,j)$ with
  $\obsv (\compute_i(j))$.
  If instead $\phi(i,j)$ is not an instance of any observation nodes
  of $\Pi$, we decorate $\phi(i,j)$ with $\tran (\compute_i(j-1),
  \compute_i(j))$. 
 
  \medskip\noindent\textbf{Invariants:}  Our construction maintains
  the following invariants:
  \begin{enumerate}
    \item Whenever $n_1\leadsto n_2$, the post-state of $n_1$ is
    \label{clause:pre:post}
    well-defined, the pre-state of $n_2$ is well defined, and the two
    states are equal.
    \item Every $\leadsto$ arrow points from $\phi(i,j)$ to
    \label{clause:leadsto:forward}
    $\phi(i,k)$ where $j<k$ and the first arguments are equal.
    \item If $\phi(i,j)\leadsto\phi(i,k)$, then $j$ is the largest
    \label{clause:leadsto:max}
    $j'<k$ such that $\phi(i,j')$ is an $\xstinit$ or $\xtran$ node
    and the post-state of $\phi(i,j')$ equals the pre-state of
    $\phi(i,k)$.
    \item The set of indices $\{k\colon\phi(i,j) \leadsto \phi(i,k)\}$
    \label{clause:leadsto:interval}
    of $\leadsto$ successors of $\phi(i,j)$ forms an interval
    $[j+1,\ell]$; if $j<k<\ell$, then $\phi(i,k)$ is an $\xobsv$ node.
  \end{enumerate}
  Invariant~\ref{clause:leadsto:max} helps us to infer that
  invariant~\ref{clause:leadsto:interval} holds.  

  \medskip\noindent\textbf{$\bnd^+$ is state-respecting:}  By
  invariant~\ref{clause:pre:post}, Clause~\ref{clause:leadsto:init} of
  Def.~\ref{def:bundle:enriched} is satisfied.  In the construction,
  if $n=\phi(i,j)$ and $n$ is not an $\xstinit$ node, then $j\not=1$,
  so $n$ obtains a single incoming $\leadsto$ arrow.  So
  Clause~\ref{clause:leadsto:unique} is satisfied.  Moreover, by
  Clause~\ref{clause:weak:order} of Def.~\ref{def:correlation},
  $(\rightarrow\cup\Rightarrow\cup\leadsto)^+$ is acyclic.

  Thus, the resulting $\bnd^+$ is an enriched bundle.  We must now
  show that it satisfies the two axioms of state in
  Def.~\ref{def:bundle:state:respecting}.

  Suppose then that $n\leadsto n_1$ and $n\leadsto n_2$ forms a
  state-split, where $n_1,n_2$ are distinct $\xtran$-nodes.  By
  surjectiveness and invariant~\ref{clause:leadsto:forward},
  $n=\phi(i,j)$, $n_1=\phi(i,k)$, and $n_2=\phi(i,k')$, where
  $j<k,k'$.  By symmetry, we may assume $j<k<k'$.  But then by
  invariant~\ref{clause:leadsto:interval}, $\phi(i,k)$ is an $\xobsv$
  node contrary to assumption.  Thus the No State Split Principle
  (Def.~\ref{def:bundle:state:respecting}, Clause~\ref{item:no state
    split}) is satisfied.

  Turning to Clause~\ref{item:obsv ordered}, the Observation Ordering
  Principle, consider the set $S$ of pairs $n_0,n_1$ such that $n_0$
  is an $\xobsv$ node, $n_1$ is a $\xtran$ node, and for some $n$,
  $n\leadsto n_0$ and $n\leadsto n_1$.  Then for each such pair, by
  invariant~\ref{clause:leadsto:interval}, we have $n=\phi(i,j)$,
  $n_0=\phi(i,k)$, $n_1=\phi(i,k')$ where $j<k<k'$.  Therefore, we
  have $S\subseteq R$ for the $R$ in Clause~\ref{clause:weak:order}
  for the correlation $\phi$ (Def.~\ref{def:correlation}).  Thus,
  acyclicity follows.  \qed
\end{proof}
 
\begin{lemma}
  \label{lemma:sr:bundle:to:execution}
  Given an extended protocol $\Pi^+$, let $\Pi=\fgt(\Pi^+)$ and
  $\trans=\trans(\Pi^+)$.  Let $\bnd^+$ be a state-respecting bundle
  of $\Pi^+$.

  There exists a $\Pi,\trans$-execution
  $\fgt(\bnd^+),\{\compute_i\}_{i\in I},\phi$ with $\phi$ injective.
\end{lemma}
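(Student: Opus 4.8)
The plan is to build the computation family and correlation directly from the $\leadsto$ structure of the given state-respecting bundle $\bnd^+$, essentially inverting the construction of Lemma~\ref{lemma:execution:to:sr:bundle}. First I would observe that the $\leadsto$ arrows, restricted to the set of $\xstinit$ and $\xtran$ nodes, form a partial function (each such node has at most one $\leadsto$-successor that is a $\xtran$ node, by the No State Split Principle, and at most one $\leadsto$-successor overall among $\xtran$ nodes; observations can share a producer). So I would define the \emph{transition backbones} as the maximal $\leadsto$-chains $n_0 \leadsto n_1 \leadsto \cdots$ whose nodes are $\xstinit$/$\xtran$ nodes, where $n_0$ is an $\xstinit$ node (every $\xtran$ node has a unique incoming $\leadsto$ arrow by Clause~\ref{clause:leadsto:unique} of Def.~\ref{def:bundle:enriched}, and chasing it backward must terminate at an $\xstinit$ node by acyclicity, since non-$\xstinit$ synchronization nodes with no incoming $\leadsto$ arrow are exactly the $\xstinit$ nodes). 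Index these backbones by a set $I$; backbone $i$ gives a computation $\compute_i = \seq{s_0, s_1, \ldots, s_{\ell_i}}$ by reading off the post-states along the chain, with $s_0 = \sigma(s)$ from the initiation node. That $s_j \trans s_{j+1}$ holds by definition of $\trans = \trans(\Pi^+)$, since consecutive backbone nodes are instances of a $\xtran$ role node whose pre-/post-state annotations match by Clause~\ref{clause:leadsto:init}.

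Next I would define $\phi$. For a $\xtran$ node $n$ sitting at position $j$ on backbone $i$, set $\phi(i,j) = n$; for the $\xstinit$ node at the head, $\phi(i,0)$ is that node. For each $\xobsv$ node $n'$, it has a unique incoming $\leadsto$ arrow from some producer node $n$ (Clause~\ref{clause:leadsto:unique}); $n$ lies on a unique backbone $i$ at some position $j$, so I would \emph{splice} $n'$ into computation $\compute_i$ immediately after position $j$, giving it a step that repeats the state $s_j$ (matching the $\trans^{n',\sigma}$ subrelation, which for observations is the diagonal pair $\seq{\sigma(t),\sigma(t)}$). Concretely this means reindexing: each $\compute_i$ becomes the sequence obtained by inserting, after each state $s_j$, one copy of $s_j$ for every observation hanging off the node at position $j$, in any order. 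The correlation $\phi$ then maps the inserted positions to the corresponding $\xobsv$ nodes and the original positions to the original backbone nodes. Surjectivity of $\phi$ onto $\syncnodes(\bnd)$ is immediate since every synchronization node is an $\xstinit$, $\xtran$, or $\xobsv$ node and each is hit; injectivity holds because distinct backbones are disjoint, each $\xtran$/$\xstinit$ node lands at one position, and each $\xobsv$ node is spliced in at exactly one place.

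The remaining obligation — and the step I expect to be the main obstacle — is verifying Clause~\ref{clause:weak:order} of Def.~\ref{def:correlation}: that $(\prec_{\fgt(\bnd^+)} \cup\, R)^+$ is acyclic, where $R(n,n')$ holds when $n,n'$ are $\phi$-images of positions $j<k$ on the same (reindexed) computation. I would argue that $R \subseteq \prec^+$, the relation from Clause~\ref{item:obsv ordered} of Def.~\ref{def:bundle:state:respecting}. Indeed, if $n = \phi(i,j)$ and $n' = \phi(i,k)$ with $j<k$ on the reindexed computation, then either (a) both are backbone nodes, in which case $n \leadsto^+ n'$ along the backbone, so $n \prec n'$; or (b) $n$ is a backbone node and $n'$ an observation spliced after it, so there is a common producer $m$ with $m \leadsto^* n$ (in fact $m = n$ or $m \leadsto^+ n$) and $m \leadsto n'$, whence $n \prec^+ n'$ via~\eqref{eq:obs:tran:prec} when $n$ is the producer, or via $\prec$ transitivity when $n$ is downstream; or (c) $n$ is an observation and $n'$ a backbone node or later observation sharing producer $m = \phi(i,j')$ for some $j' \le j$ — here~\eqref{eq:obs:tran:prec} gives $n \prec^+ (\text{the }\xtran\text{ node after }m)$ and backbone order handles the rest; or (d) both are observations off the same producer, which cannot create a cycle since $\prec^+$ already orders each against the next transition consistently. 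In every case the generator of $R$ is absorbed into $\prec^+$, so $(\prec \cup R)^+ \subseteq \prec^+$, which is acyclic by hypothesis. Finally I would check the defining condition of Def.~\ref{def:b:c:phi} for executions: at position $0$ the node is an $\xstinit$ node with $s_0 = \sigma(s)$ by construction, and at each later position $s_{j-1} \trans^{n,\sigma} s_j$ holds because that position's node is (an instance of) either a $\xtran$ role node with matching pre-/post-states or an $\xobsv$ role node with the repeated state — exactly the two cases defining $\trans^{n,\sigma}$. This completes the construction of the required execution with $\phi$ injective.
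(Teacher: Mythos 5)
Your construction is essentially the paper's: index the computations by the initiation nodes, use the No State Split Principle plus the uniqueness of incoming $\leadsto$ arrows to show the $\xstinit$/$\xtran$ nodes descending from each initiation node form a single chain, read the states off that chain, and splice the observations in after their producers. The one place you deviate is also where the argument breaks: you insert the observations hanging off a common producer ``in any order.'' That is not sound. Two $\xobsv$ nodes $o_1,o_2$ with the same $\leadsto$-predecessor can already be ordered by the bundle's causal order $\prec$ --- for instance, two \texttt{quote}-like strands observing the same state, where a message emitted after $o_1$ is needed to produce a message received before $o_2$, forces $o_1\prec o_2$. If your splicing happens to place $o_2$ before $o_1$, then $R(o_2,o_1)$ holds while $o_1\prec o_2$, and $(\prec_{\fgt(\bnd^+)}\cup R)^+$ has a cycle, so Clause~\ref{clause:weak:order} of Def.~\ref{def:correlation} fails and the triple you built is not an execution.

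Your verification actually flags this as case~(d) but then disposes of it with a non-argument: $\prec^+$ (from Def.~\ref{def:bundle:state:respecting}) orders each observation before the \emph{next transition}, via~\eqref{eq:obs:tran:prec}; it says nothing about the relative order of two observations sharing a producer, so the claimed inclusion $R\subseteq\prec^+$ is false precisely in that case. The fix is what the paper does: between each adjacent pair of backbone nodes, choose a linearization of the intervening observations that is \emph{compatible with} $\prec^+$ (equivalently, take one global linear extension of $\prec^+$ and order every block of observations by it), rather than an arbitrary one. With that single change the rest of your argument --- the backbone chains, the state assignments, the surjectivity and injectivity of $\phi$, and the check of Def.~\ref{def:b:c:phi} --- goes through as in the paper.
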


\begin{proof} \textbf{Determining $I$ and partitioning the nodes.}  By
  well-founded induction, for every synchronization node $n_1$, there
  exists an initiation node $n_0$ such that $n_0\leadsto^*n_1$.  By
  Def.~\ref{def:bundle:enriched}, Clause~\ref{clause:leadsto:init}, if
  $n_0$ is an initiation node, then for all $n$, $n\not\leadsto n_0$.
  By induction and the uniqueness in Def.~\ref{def:bundle:enriched},
  Clause~\ref{clause:leadsto:unique}, there is exactly one initiation
  node $n_0$ such that $n_0\leadsto^*n_1$.  Define the index set
  $I=\{n_0\in\syncnodes(\bnd^+)\colon n_0$ is an $\xstinit\}$.  Now,
  the $I$-indexed family of sets of nodes:
  $$ \mathcal{P} = \{ \, \{n_1\colon n_0\leadsto^*n_1 \} \, 
  \}_{n_0\in I} $$ 
  is a partition of the synchronization nodes indexed by $\xstinit$
  nodes.

  Consider now an $\xstinit$ node $n_0$ and the partition element
  $P_{n_0}$ of $\mathcal{P}$ where
  $$ P_{n_0} = \{ n_1\colon n_0\leadsto^*n_1 \} .
  $$
  We will show how to construct a computation $\compute_{n_0}$ and a
  piece of the correlation $\lambda j\qdot\phi(n_0,j)$ that will cover
  $P_{n_0}$.

  \medskip\noindent\textbf{Ordering the nodes.}  First, consider the
  $\xtran,\xstinit$ nodes in $P_{n_0}$.  We claim that they are
  \emph{linearly ordered} by $\leadsto^+$.  For otherwise, let
  $n_1,n_2$ be distinct incomparable nodes under $\leadsto^+$.  By the
  definition of $P_{n_0}$, $n_1\not=n_0\not=n_2$, since $n_0$ is
  related to every node in $P_{n_0}$.  Thus, $n_1,n_2$ are $\xtran$
  nodes.  By well-foundedness, we may assume that $n_1,n_2$ are each
  chosen to be a $\leadsto^+$-minimal pair of incomparable $\xtran$
  nodes in $P_{n_0}$.  Since $n_1$ is an $\xtran$ node, it has a
  predecessor $n_1'$.  By minimality, either $n_1'\leadsto^+n_2$ or
  else $n_2\leadsto^+n_1'$.  But the latter implies
  $n_2\leadsto^+n_1$, so in fact $n_1'\leadsto^+n_2$.  By minimality
  of $n_2$, $n_1'\leadsto n_2$.  So $n_1'\leadsto n_1$ and
  $n_1'\leadsto n_2$, contradicting the No State Split Principle.

  So $\xtran,\xstinit$ nodes in $P_{n_0}$ are linearly ordered by
  $\leadsto^+$.  

  Consider any pair of adjacent $\xtran,\xstinit$ nodes $n_1\leadsto
  n_2$.  Let $\mathcal{O}$ be the set of $\xobsv$ nodes $n_o$ such
  that $n_1\leadsto n_o$.  $\mathcal{O}\cup\{n_1,n_2\}$ are partially
  ordered by $\prec^+$ since $\bnd^+$ satisfies the Observation
  Ordering Principle.  Let $n_1,o_1,\ldots,o_k,n_2$ be any
  linearization of this set compatible with $\prec^+$.  

  Applying this throughout $P_{n_0}$, we obtain a sequence
  $\seq{n_0, \ldots,n_\ell}$ containing all the nodes of $P_{n_0}$,
  where the sequence ordering extends the $\prec^+$ ordering.

  \medskip\noindent\textbf{Defining $\compute_{n_0}$ and
    $\lambda j\qdot\phi(n_0,j)$.}  We now define $\compute_{n_0}$ and
  the $n_0$ slice of $\phi$ by stipulating:
  \begin{enumerate}
    \item$\compute_{n_0}(0)=$ the post-state of the initiation node
    $n_0$;
    \item$\compute_{n_0}(j+1)$ is the post-state of $n_j$ if it is a
    $\xtran$ node, and the pre-state of $n_j$ if it is an $\obsv$
    node;
    \item$\phi(n_0,j)=n_j$ for all $j$ where
    $0\le j<\length{\compute_{n_0}}$.
  \end{enumerate}
  Now by the definition of $\seq{n_0, \ldots,n_\ell}$, $\phi$
  satisfies the order constraint (Clause~\ref{clause:weak:order}), and
  the other clauses for correlations are immediate; moreover, $\phi$
  is injective because the $\lambda j\qdot\phi(n_0,j)$ are disjoint
  for different partition classes $P_{n_0}$.  The triple
  $\fgt(\bnd^+),\{\compute_i\}_{i\in I},\phi$ is an execution of
  $\Pi,\trans(\Pi^+)$, by the definition of $\trans(\Pi^+)$.  \qed
\end{proof}

}\else{Given a state-respecting $\bnd^+$, we show how
  to follow its $\leadsto$ arrows, thereby generating one state
  machine computation starting from each initiation node.  This
  process would fail if the state axioms did not hold.  Conversely,
  given a family of computations, we can use its steps to determine
  what states to assign to each synchronization node, and how to draw
  $\leadsto$ arrows between them.}  \fi

\subsection{Enrich-by-need for stateful protocols}

In order to analyze stateful protocols with respect to
state-respecting bundles (Def.~\ref{def:bundle:state:respecting}), we
adapted the Cryptographic Protocol Shapes Analyzer (\cpsa) which
performs automated protocol analysis with respect to (traditional)
bundles (Def.~\ref{def:bundle}). {\cpsa} uses the enrich-by-need
method as described in the Introduction. That is, it progressively
extends an execution fragment $\skel$ into a set of execution
fragments $\{\skelB_i\}$. The extending occurs only as needed, namely,
when the execution fragment does not contain enough information to
fully describe a bundle. For message-only protocols, extending is
necessary exactly when a message received at node~$n$ cannot be
derived by the adversary using previously sent messages as inputs to a
web of adversary strands.

We adapted {\cpsa} in several ways to account for the properties of
state synchronization nodes in state-respecting bundles. First, we
added state synchronization nodes to the internal data structures of
the tool. We then augmented the tool to recognize that extending is
necessary when a state synchronization node~$n$ has pre-state~$s$, but
there is no node $n_0$ with post-state~$s$ such that $n_0\leadsto n$.
Finally, we implemented the corresponding rules for extending
execution fragments by adding state synchronization nodes that supply
the necessary state. In doing so, we experimented with two versions,
one works for enriched bundles that need not satisfy the two axioms
from Def.~\ref{def:bundle:state:respecting}, and one which enforces
these axioms. This former version allow us to perform analyses that
lead to bundles satisfying Def.~\ref{def:bundle:enriched} which do not
correspond to any executions of the state-machine model.  The latter
eliminates these ersatz results.

One advantage to the use of state-respecting bundles is that it
allowed us to integrate an analysis of the stateful part of the
protocol in a modular fashion. Our current release of
{\cpsa}~\cite{cpsa09} simply adds techniques for state-based reasoning
without altering the message passing analysis algorithms.  The analysis
of protocols that do not contain state synchronization nodes remains
unchanged.  We thus provide a clean separation of the two distinct
aspects of stateful protocols in an integrated whole.

The next section explores several examples that demonstrate the
results of these two versions and hopefully provide some intuition
about why the two axioms of state are necessary.


\section{Analysis of the Envelope Protocol}
\label{sec:state_based_analysis}

The two conditions of Def.~\ref{def:bundle:state:respecting} identify
the crucial aspects of state that distinguish state events from
message events. They axiomatize necessary properties of state that are
not otherwise captured by the properties of enriched bundles. In order
to give the reader some intuition for these properties, we present
several analyses of the Envelope Protocol in this section. We begin by
contrasting two analyses; one is based on enriched bundles that only
satisfy Definition~\ref{def:bundle:enriched}, while the other is based
on state-respecting bundles that also satisfy
Definition~\ref{def:bundle:state:respecting}

\paragraph{Enriched vs. state-respecting bundles.}
Recall that the Envelope Protocol was designed to satisfy Security
Goal~\ref{goal:alice}. That is, there should be no executions in which
(1) Alice completes a run with fresh, randomly chosen values for~$v$
and~$n$, (2)~$v$ is available unencrypted on the network, and (3) the
refusal certificate $Q$ is also available on the network. Whether we
use enriched bundles or state-respecting bundles as our model of
execution, the analysis begins the same way. The relevant fragment of
the point at which the two analyses diverges is depicted in
Fig.~\ref{fig:state split}. The reader may wish to refer to the figure
during the following description of the enrich-by-need process. The
first three steps describe how we infer the existence of the top row
of strands from right to left. The last two steps explain how we infer
the strands in the bottom row from left to right.

\begin{figure}
  \[\xymatrix@R=3ex{\texttt{alice}\ar@{:>}[d]&\texttt{extend}\ar@{:>}[d]
    &\texttt{extend}\ar@{:>}[dd]&\texttt{decrypt}\ar@{:>}[dd]\\
    \bullet\ar[r]^{\cdots n\cdots}\ar@{:>}[ddd]&\bullet\ar@2[d]\\
    &\circ\ar@{~>}[r]\ar@{.>}[dddr]^{\ast}&\circ\ar@{~>}[r]&\circ\ar@2[d]\\
    &\texttt{quote}\ar@{:>}[d]&\texttt{extend}\ar@{:>}[dd]&\bullet\ar[r]^{v}&\\
    \bullet\ar[r]&\bullet\ar@2[d]\\
    &\circ\ar@{=>}[d]&\circ\ar@{~>}[l]\\
    &\bullet\ar_{Q}[l]}\]
  \caption{A crucial moment in the {\cpsa} analysis of the Envelope
    Protocol, demonstrating the importance of our first axiom of
    state.}\label{fig:state split}
\end{figure}
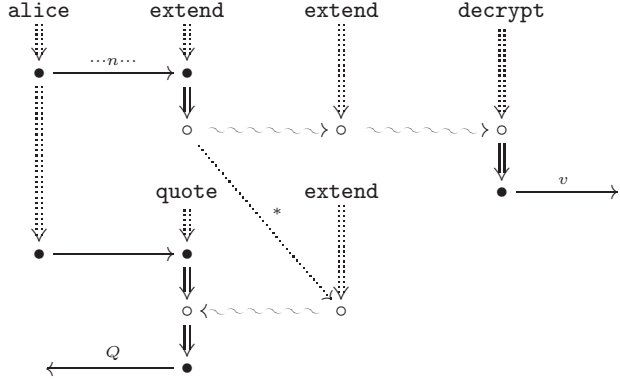

\begin{enumerate}
\item The presence of~$v$ in unencrypted form implies the existence of
  a \texttt{decrypt} strand to reveal it.
  \item The \texttt{decrypt} strand requires the current state to be
  $\#(\cn{obt},\#(n,s))$, so our new principle of state explanation
  implies the existence of an extend strand with input value
  $\cn{obt}$.
\item This newly inferred \texttt{extend} strand, in turn must have
  its current state $\#(n,s)$ explained which is done by another
  \texttt{extend} strand that receives the value $n$ from Alice. 
\item The presence of the quoted refusal token~$Q$ implies the
  existence of a \texttt{quote} strand to produce it.
\item The \texttt{quote} strand requires the state to be
  $\#(\cn{ref},\#(n,s))$, which allows us to infer the third
  \texttt{extend} strand.
\end{enumerate}

At this point in the analysis, the underlying semantics of bundles
begins to matter. Our analysis still must explain how the state became
$\#(n,s)$ for this last \texttt{extend} strand. If we use enriched
bundles that do not satisfy
Definition~\ref{def:bundle:state:respecting}, then we may re-use the
\texttt{extend} strand inferred in Step~3 as an explanation. This
would cause us to add a $\leadsto$ arrow between these two state
events (along the dotted arrow $\ast$ of Fig.~\ref{fig:state split})
forcing us to ``split'' the state coming out of the earlist extend
strand. Further steps allow us to discover an enriched bundle
compatible with our starting point, contrary to Security
Goal~\ref{goal:alice}.  Importantly, however, all enriched bundles
that extend the fragment with the split state are
non-state-respecting.

If, on the other hand, we only allow state-respecting bundles,
Condition~1 of Definition~\ref{def:bundle:state:respecting} does not allow
us to re-use the \texttt{extend} strand inferred in Step~3 to explain
the state found on the strand of Step~5. Instead, we are forced to
infer yet another \texttt{extend} strand that receives Alice's
nonce~$n$. However, since Alice uses an encrypted session that
provides replay protection, the adversary has no way to return the TPM
state to $\#(n,s)$. Thus, although there are enriched bundles that
violate Security Goal~\ref{goal:alice}, there are no state-respecting
bundles that do so. 

\paragraph{A flawed version.}  
We also performed an analysis of the Envelope Protocol, removing the
assumption that Alice's nonce $n$ is fresh, to demonstrate
our state-respecting variant's ability to automatically detect
attacks.  The analysis proceeds similarly; as in the previous analysis
we decline to add a $\leadsto$ arrow along $\ast$ thanks to our
stateful semantics.  However, the alternative possibility that a fresh
\texttt{extend} strand provides the necessary state proves to work
out.  Because $n$ is not freshly chosen, the parent can engage in a
distinct \texttt{extend} session with the same $n$.

Note that our analysis does not specify that $s = \cn{s}_0$, where
$s$ is the state of the PCR when first extended.  For the case where
$s = \cn{s}_0$, the attack is to reboot the TPM after obtaining one
value (either the refuse token or Alice's secret), re-extend the boot
state with $n$, and then obtain the other.  More generally, as long as $s$
is a state that the parent can induce, a similar
attack is possible.  


\subsection{The Importance of Observer Ordering}
\label{sec:axiom2}

The Envelope Protocol example demonstrates the crucial importance of
capturing our first axiom of state correctly.  The second axiom,
involving the relative order of observations and state transition, is
no less crucial to correct understanding of stateful protocols.

Another example protocol, motivated by a well-known issue with PKCS \#11 
(see, e.g. \cite{DKR08}), illustrates the principle more clearly.
Suppose a hardware device is capable of producing keys that are meant
to be managed by the device and not learnable externally.  If the
device has limited memory, it may be necessary to export such a key in
an encrypted form so the device can utilize external storage.

Thus, device keys can be used for two distinct purposes: for
encryption / decryption of values on request, or for encrypting
internal keys for external storage.  It is important that the purpose
of a given key be carefully tracked, so that the device is not induced
to decrypt one of its own encrypted keys.

Suppose that for each key, the device maintains a piece of state,
namely, one of three settings:

\begin{itemize}
\item A $\mathsf{wrap}$ key is used only to encrypt internal keys.
\item A $\mathsf{decrypt}$ key may be used to encrypt or decrypt.
\item An $\mathsf{initial}$ key has not yet been assigned to either use.
\end{itemize}

If a key in the $\mathsf{wrap}$ state can later be put in the
$\mathsf{decrypt}$ state, a relatively obvious attack becomes
possible: while in the wrap state, the device encrypts some internal
key, and later, when the key is in the decrypt state, the device
decrypts the encrypted internal key.

However, if keys can never exit the $\mathsf{wrap}$ state once they
enter it, this attack should not be possible.  If we were to represent
this protocol within {\cpsa}, we would include the following roles:

\begin{itemize}
\item A \texttt{create key} role that generates a fresh key and
  initializes its state to $\mathsf{initial}$
\item A \texttt{set wrap} role that transitions a key from
  $\mathsf{initial}$ or $\mathsf{decrypt}$ to $\mathsf{wrap}$.
\item A \texttt{set decrypt} role that transitions a key from
  $\mathsf{initial}$ to $\mathsf{decrypt}$.
\item A \texttt{wrap} role in which a user specifies two keys (by
  reference), and the device checks (with an observer) that the first
  is in the $\mathsf{wrap}$ state and if so, then encrypts the second
  key with the first and transmits the result.
\item A \texttt{decrypt} role in which a user specifies a key (by
  reference) and a ciphertext encrypted under that key, and the device
  checks (with an observer) that the key is in the $\mathsf{decrypt}$
  state and if so, then decrypts the ciphertext and transmits the
  resulting plaintext.
\end{itemize}

\begin{figure}
 \[\xymatrix@R=3ex{\texttt{init}\ar@{:>}[d]&\texttt{set decrypt}\ar@{:>}[d]&\texttt{set wrap}\ar@{:>}[d]&&\texttt{wrap}\ar@{:>}[d]\\
             \circ\ar@{~>}[r]&\circ\ar@{~>}[r]\ar@{~>}[dddrr]&\circ
\ar@{~>}[rr]&&\circ\ar@{=>}[d]\\
             &&&\texttt{decrypt}\ar@{:>}[d]&\bullet\ar[dl]^{\enc{k_2}{k_1}}\\
             &&&\bullet\ar@{=>}[d]&\\
             &&&\circ\ar@{=>}[d]\ar@{.>}[uuul]&\\
             &&&\bullet\ar[r]^{k_2}&\\
}\]
  \caption{Observer ordering example}\label{fig:wrap-decrypt}
\end{figure}
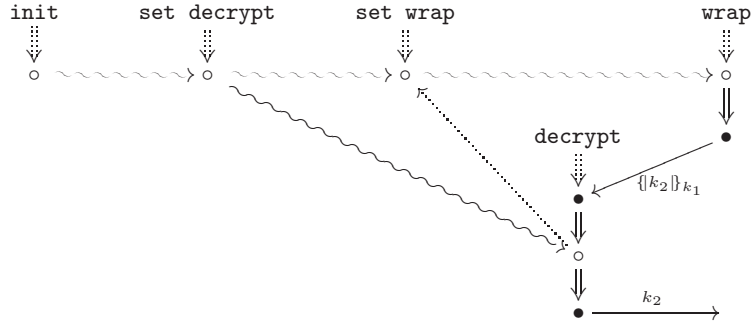

Note that the attack should not be possible.  However, the bundle
described in Fig.~\ref{fig:wrap-decrypt} is a valid bundle, and fails
to be state-respecting only because of our axiom about observers.  Our
second axiom induces an ordering so that the observer in the
\texttt{decrypt} strand occurs before the following transition event
in the \texttt{set wrap} strand.  The induced ordering is shown in the
figure with a single dotted arrow; note the cycle among state events
present with that ordering that is not present without it.


\section{Related Work}
\label{sec:related}

The problem of reasoning about protocols and state has been an
increasing focus over the past several years.  Protocols using TPMs
and other hardware security modules (HSMs) have provided one of the
main motivations for this line of work.

A line of work was motivated by HSMs used in the banking
industry~\cite{herzog2006applying,youn2007robbing}.  This work
identified the effects of persistent storage as complicating the
security analysis of the devices.  There was also a strong focus on
the case of PKCS \#11 style devices for key
management~\cite{cortier2007automatic,cortier2009generic,froschle2011reasoning}.
These papers, while very informative, exploited specific
characteristics of the HSM problem; in particular, the most important
mutable state concerns the \emph{attributes} that determine the usage
permitted for keys.  These attributes should usually be handled in a
monotonic way, so that once an attribute has been set, it will not be
removed.  This justifies using abstractions that are more typical of
standard protocol analysis.

In the TPM-oriented line of work, an early example using an
automata-based model was by G{\"u}rgens et
al.~\cite{gurgens2007security}.  It identified some protocol failures
due to the weak binding between a TPM-resident key and an individual
person.  Datta et al.'s ``A Logic of Secure
Systems''~\cite{datta2009logic} presents a dynamic logic in the style
of PCL~\cite{DattaEtAl05} that can be used to reason about programs
that both manipulate memory and also transmit and receive
cryptographically constructed messages.  Because it has a very
detailed model of execution, it appears to require a level of effort
similar to (multithreaded) program verification, unlike the less
demanding forms of protocol analysis.

M{\"o}dersheim's set-membership
abstraction~\cite{modersheim2010abstraction} works by identifying all
data values (e.g.~keys) that have the same properties; a change in
properties for a given key $K$ is represented by translating all facts
true for $K$'s old abstraction into new facts true of $K$'s new
abstraction.  The reasoning is still based on monotonic methods
(namely Horn clauses).  %
Thus, it seems not to be a strategy for reasoning about TPM usage, for
instance in the Envelope Protocol.

Guttman~\cite{Guttman12} developed a theory for protocols (within
strand spaces) as constrained by state transitions, and applied that
theory to a fair exchange protocol.  It introduced the key notion of
\emph{compatibility} between a protocol execution (``bundle'') and a
state history.  This led to work by Ramsdell et
al. ~\cite{RamsdellDGR14} that used {\cpsa} to draw conclusions in the
states-as-messages model.  Additional consequences could then be
proved using the theorem prover PVS~\cite{cade92-pvs}, working within
a theory of both messages and state organized around compatibility.

A group of papers by Ryan with Delaune, Kremer, and
Steel~\cite{delaune2011formal,DelauneEtAl2011}, and with Arapinis and
Ritter~\cite{arapinis2011statverif} aim broadly to adapt ProVerif for
protocols that interact with long-term state.
ProVerif~\cite{Blanchet01,AbadiBlanchet05} is a Horn-clause based
protocol analyzer with a monotonic method: in its normal mode of
usage, it tracks the messages that the adversary can obtain, and
assumes that these will always remain available.  Ryan et al.~address
the inherent non-monotonicity of adversary's capabilities by using a
two-place predicate $\mathrm{att}(u,m)$ meaning that the adversary may
possess $m$ at some time when the long-term state is $u$.
In~\cite{arapinis2011statverif}, the authors provide a compiler from a
process algebra with state-manipulating operators to sets of Horn
clauses using this primitive.  In~\cite{DelauneEtAl2011}, the authors
analyze protocols with specific syntactic properties that help ensure
termination of the analysis.  In particular, they bound the state
values that may be stored in the TPMs.  In this way, the authors
verify two protocols using the TPM, including the Envelope Protocol.

Meier, Schmidt, Cremers, and Basin's tamarin prover~\cite{MeierSCB13}
uses multiset rewriting (MSR) as a semantics in which to prove
properties of protocols.  Since MSR suffices to represent state, it
provides a way to prove results about protocols with state.
K{\"u}nnemann studied state-based protocol
analysis~\cite{KremerKuennemann14} in a process algebra akin to
StatVerif, which he translated into the input language of tamarin to
use it as a proof method.  Curiously, the main constructs for mutable
state and concurrency control (locking) are axiomatized as properties
of traces rather than encoded within MSR
(see~\cite[Fig.~10]{KremerKuennemann14}).

\paragraph{Our work.}  One distinguishing feature of this work is
our extremely simple modification to the plain message
passing semantics to obtain a state-respecting model.  These are the
two Axioms~\ref{item:no state split}--\ref{item:obsv ordered} in
Def.~\ref{def:bundle:state:respecting}.  We think it is an attractive
characteristic of the strand space framework that state reflects such
a clean foundational idea.  Moreover, this foundational idea motivated
a simple set of alterations to the enrich-by-need tool {\cpsa}.  


\section{Protocol Security Goals}
\label{sec:goals}

The enrich-by-need analysis performed in our enhanced version of
\cpsa~is fully compatible with the language of goals found in previous
work such as~\cite{RoweEtAl2015}.  The goal language is based on two
classes of predicates: role-related predicates that relate an event or
  parameter value to its use within a specific protocol role, and 
  predicates that are protocol-independent
  and describe important properties of bundles.  The latter includes the
  ordering of events as well as assumptions about freshly chosen
  values and uncompromised keys.  
Both classes of predicates apply within state-respecting bundles in a
natural way.  The role-related predicates are sensitive only to the
position of an event in the sequence of events of a role, and to the
choice of parameter values in that instance of the role.  Indeed, nodes
that represent state transitions or observations are handled in
exactly the same way, since they have positions in the role and
parameter values in just the same way as the message transmission and
reception events.  


Thus, the state-respecting version of {\cpsa} can verify formulas
expressing security goals in exactly the same way as the previous
version, and with the same semantic definitions.  

\paragraph{Conclusion.}  In this paper, we have argued
that {\cpsa}---and possibly other formalized protocol analysis
methods---can provide reliable analysis when protocols are
standardized, even when those protocols are manipulating devices with
long-term state.  A core idea of the formalization are the two axioms
of Def.~\ref{def:bundle:state:respecting}, which encapsulate the
difference between a message-based semantics and the state-respecting
semantics.


\nocite{RamsdellEtAl09}

\bibliography{../../inputs/secureprotocols}
\bibliographystyle{plain}



\end{document}